\documentclass[a4paper,11pt]{elsarticle}
\usepackage[english]{babel}
\usepackage[T1]{fontenc}
\usepackage{amsmath, amsthm}
\usepackage{ae,aecompl}
\usepackage{amssymb}
\usepackage[section] {placeins}
\usepackage[utf8]{inputenc}

\usepackage[ruled, vlined, linesnumbered]{algorithm2e}
\newtheorem {Theorem}                 {Theorem}         [section]
\newtheorem {theorem}      [Theorem]  {Theorem}
\newtheorem {myalgorithm}    [Theorem]  {Algorithm}

\newtheorem {lemma}        [Theorem]  {Lemma}

\newtheorem {problem}  [Theorem]  {Problem}

\input amssym.def
\input amssym

\usepackage{pgfplots}
\usepackage{verbatim}
\usepackage{subfigure}
\usepackage{tikz}
\usetikzlibrary{shapes,arrows,backgrounds,mindmap}
\usepackage{titlesec}
\journal{arXiv}

\begin{document}
	\begin{frontmatter}
		\title{Problems related to strong connectivity and strong biconnectivity}
		\author{Raed Jaberi}
		\ead{Jaberi.Raed@gmail.com}

		\begin{abstract}
		
Let $G=(V,E)$ be a strong biconnected graph and let $B \subseteq V$ such that for each vertex $w \in B$, the subgraph $G \setminus \lbrace w\rbrace$ is strongly connected. In this paper we study the problem of computing a subset $E_{\beta} \subseteq E$ of minimum size such that the subgraph $G_{\beta}=(V,E_{\beta})$ is strongly biconnected and for each vertex $w \in B$, the subgraph $G_{\beta} \setminus \lbrace w\rbrace$ is strongly connected. We prove that there exists a polynomial time $7$-approximation algorithm for this problem.

	\end{abstract} 
		\begin{keyword}
			Algorithms \sep Graph connectivity \sep Directed graphs\sep strongly connected graph 
		\end{keyword}
	\end{frontmatter}
	\section{Introduction}
Let $G=(V,E)$ be a strong biconnected graph and let $B \subseteq V$ such that for each vertex $w \in B$, the subgraph $G \setminus \lbrace w\rbrace$ is strongly connected. In this paper we study the problem of computing a subset $E_{\beta} \subseteq E$ of minimum size such that the subgraph $G_{\beta}=(V,E_{\beta})$ is strongly biconnected and for each vertex $w \in B$, the subgraph $G_{\beta} \setminus \lbrace w\rbrace$ is strongly connected. Note that the minimum strongly biconnected spanning subgraph problem is a special case of this problem when $B=\emptyset$ \cite{Jaberi22}. Therefore, this problem is NP-complete.

The problem of calculating a strongly connected spanning subgraph with minimum number of edges is NP-complete \cite{Gareyetal1979,Khulleretal94}. In $1994$,
Khuller et al.\cite{Khulleretal94} gave an approximation algorithm for the minimum strongly connected spanning subgraph problem. In $2001$, Vetta \cite{Vetta2001} presented a polynomial time  approximation algorithms of an approximation ratio of $3/2$. Moreover, in $2003$, Zhao et al. \cite{Zhaoetal03} developed a linear time $5/3$-approximation for the minimum strongly-connected spanning subgraph problem. The problem of finding strongly connected components in directed graph has been thoroughly investigated and solved by efficient linear time algorithms 
\cite{TAARJAN72,Ahoetal99,Sharir1981,Cormen2022,Gabow2000,CM96, SandersDietzfelbingerDementiev2019,Nuutila1994,Pearce2016}.
Wu and Grumbach \cite{WG2010} introduced the concept of strongly biconnected directed graph and strongly biconnected components. The problem of computing a strongly biconnected spanning subgraph with minimum number of edges is NP-complete and has been studied by Jaberi in $2022$ \cite{Jaberi22}. Many concepts and problems related to strong connectivity, strong biconnectivity and twinless strong connectivity in directed graphs  have received considerable attention in 
\cite{WG2010,Botea2018,BoteaIJCAI2018,Botea2015,Raghavan06,Jaberi2019, JaberiComputing2twinlessblocks,Jaberi2021,Jaberi2022,Jaberi01897,GeorgiadisandKosinas20,Jaberi09793,
Jaberi03788,JaberiMansour2026,GevigneySzigeti2018,GeorgiadisKosinas2020}. $k$-vertex-connected and $k$-edge connected graphs have many useful properties \cite{Menger1927,Mader1978,Mader1985}. Testing wheteher undirected graph is $2$-vertex connected and finding bridges, articulation points and biconnected components can be done in linear time \cite{TAARJAN72,Gabow2000,Schmidt2013,Brandeseatal2002,Ebert1983,Tarjan1974}. Testing whether a directed graph is $2$-vertex connected can be done in linear time \cite{Georgiadis2010}. 
Strong articulation points and strong bridges can be identified in linear time using efficient algorithms presented by Italiano et al. \cite{ItalianoLauraSantaroni2012,Italiano2010,FirmaniGeorgiadisItalianoLauraSantaroni2016}. These efficient algorithms are based on a close relationship between strong articulation points, strong bridges and dominators in flowgraphs. The problem of finding dominators can be solved efficiently in flowgraphs \cite{BGKRTW00,GT16,GeorgiadisTarjan2005,LT79,Alstrupetal99,Buchsbaumetal200}. Carmesin et al. studied $k$-blocks in undirected graphs \cite{CDHH2014}. Makino \cite{Makino1988} and Reif et al. \cite{ReifSpirakis1981}  studied $k$-components in directed graphs.
The problem of computing $2$-connected components and $2$-blocks in directed graphs have been extensively studied in  \cite{Erusalimskiietal80,Jaberi2015,Jaberi2016,Jaberi09793,Jaberi2edgetwinlessblocks,JaberiComputing2twinlessblocks,GeorgiadisGeorgiadisLauraParotsidis2015,GeorgiadisItalianoLauraParotsidis2016,
GeorgiadisItalianoLauraParotsidis15,GeorgiadisItalianoLauraParotsidis2018,
GeorgiadisItalianoLauraParotsidisKaranasiouPaudel2018,GeorgiadisItalianoPapadopoulosPapadopoulosParotsidis2015, Henzingeretal2015,Dahiphaleetal2023,CHILP18}.

Let $G=(V,E)$ be a strong biconnected graph and let $B \subseteq V$ such that for each vertex $w \in B$, the subgraph $G \setminus \lbrace w\rbrace$ is strongly connected. In this paper we study the problem of computing a subset $E_{\beta} \subseteq E$ of minimum size such that the subgraph $G_{\beta}=(V,E_{\beta})$ is strongly biconnected and for each vertex $w \in B$, the subgraph $G_{\beta} \setminus \lbrace w\rbrace$ is strongly connected. We will denote this problem by MBSC. We prove that there exists a polynomial time $7$-approximation algorithm for MBSC
\section{A special case where  all cervices in $V\setminus B$ are strong articulation points}
In $2015$, Jaberi \cite{Jaberi2015} proved that there is an approximation algorithm with approximation factor $17/3$ for the minimum strongly connected spanning subgraph with the same strong articulation points. His proof is based on the results of Georgiadis \cite{Georgiadis2010}, Italiano \cite{ItalianoLauraSantaroni2012}, and Zhao et al. \cite{Zhaoetal03}.

Example \ref{figure:example1section1} shows that an optimal solution for the minimum strongly connected spanning subgraph with the same strong articulation points is not necessarily an optimal solution for MBSC.  
\begin{figure}[htbp]
	\centering

\subfigure[]{	
\scalebox{0.91}{

		\begin{tikzpicture}[xscale=1.7]
		\tikzstyle{every node}=[color=black,draw,circle,minimum size=1cm]
			\node (v8) at (0.2,3.4) {$8$};
		\node (v9) at (-0.71,0.1) {$9$};
	
		\node (v1) at (-1.75,3.1) {$1$};
		\node (v10) at (1.72,-1.5) {$10$};
		\node (v2) at (-2.31,0) {$2$};
		\node (v3) at (0, -2.45) {$3$};
		\node (v4) at (3.43,-2) {$4$};
		\node (v5) at (1.41,0.5) {$5$};
		\node (v6) at (3.71,0.6) {$6$};
		\node (v7) at (3.72,3.3) {$7$};

		\begin{scope}   
		\tikzstyle{every node}=[auto=right] 
		\draw [-triangle 45] (v3) to (v5);
	    \draw [-triangle 45] (v5) to (v10);
	    \draw [-triangle 45] (v10) to (v4);
	    \draw [-triangle 45] (v4) to (v6);
	    \draw [-triangle 45] (v6) to (v7);
	    \draw [-triangle 45] (v7) to (v8);
	    \draw [-triangle 45] (v8) to (v1);
	    \draw [-triangle 45] (v1) to (v2);
	    \draw [-triangle 45] (v2) to (v9);
	    \draw [-triangle 45] (v9) to (v3);
	    \draw [-triangle 45] (v9) to (v5);
	    \draw [-triangle 45] (v9) to (v1);
	    \draw [-triangle 45] (v9) to (v8);
	    \draw [-triangle 45] (v2) to (v3);
	   
	    \draw [-triangle 45] (v5) to (v8);
	  \draw [-triangle 45] (v10) to (v6);
		\end{scope}
		\end{tikzpicture}
	}
	}	
	\subfigure[]{	
\scalebox{0.91}{

		\begin{tikzpicture}[xscale=1.7]
		\tikzstyle{every node}=[color=black,draw,circle,minimum size=1cm]
			\node (v8) at (0.2,3.4) {$8$};
		\node (v9) at (-0.71,0.1) {$9$};
	
		\node (v1) at (-1.75,3.1) {$1$};
		\node (v10) at (1.72,-1.5) {$10$};
		\node (v2) at (-2.31,0) {$2$};
		\node (v3) at (0, -2.45) {$3$};
		\node (v4) at (3.43,-2) {$4$};
		\node (v5) at (1.41,0.5) {$5$};
		\node (v6) at (3.71,0.6) {$6$};
		\node (v7) at (3.72,3.3) {$7$};

		\begin{scope}   
		\tikzstyle{every node}=[auto=right] 
		\draw [-triangle 45] (v3) to (v5);
	    \draw [-triangle 45] (v5) to (v10);
	    \draw [-triangle 45] (v10) to (v4);
	    \draw [-triangle 45] (v4) to (v6);
	    \draw [-triangle 45] (v6) to (v7);
	    \draw [-triangle 45] (v7) to (v8);
	    \draw [-triangle 45] (v8) to (v1);
	    \draw [-triangle 45] (v1) to (v2);
	    \draw [-triangle 45] (v2) to (v9);
	   
	    \draw [-triangle 45] (v9) to (v5);
         \draw [-triangle 45] (v2) to (v3);
	 \draw [-triangle 45] (v10) to (v6);

		\end{scope}
		\end{tikzpicture}
	}
	}	
	\subfigure[]{	
\scalebox{0.91}{

		\begin{tikzpicture}[xscale=1.7]
		\tikzstyle{every node}=[color=black,draw,circle,minimum size=1cm]
			\node (v8) at (0.2,3.4) {$8$};
		\node (v9) at (-0.71,0.1) {$9$};
	
		\node (v1) at (-1.75,3.1) {$1$};
		\node (v10) at (1.72,-1.5) {$10$};
		\node (v2) at (-2.31,0) {$2$};
		\node (v3) at (0, -2.45) {$3$};
		\node (v4) at (3.43,-2) {$4$};
		\node (v5) at (1.41,0.5) {$5$};
		\node (v6) at (3.71,0.6) {$6$};
		\node (v7) at (3.72,3.3) {$7$};

		\begin{scope}   
		\tikzstyle{every node}=[auto=right] 
		\draw [-triangle 45] (v3) to (v5);
	    \draw [-triangle 45] (v5) to (v10);
	    \draw [-triangle 45] (v10) to (v4);
	    \draw [-triangle 45] (v4) to (v6);
	    \draw [-triangle 45] (v6) to (v7);
	    \draw [-triangle 45] (v7) to (v8);
	    \draw [-triangle 45] (v8) to (v1);
	    \draw [-triangle 45] (v1) to (v2);
	    \draw [-triangle 45] (v2) to (v9);
	  
	    \draw [-triangle 45] (v9) to (v3);
         \draw [-triangle 45] (v2) to (v3);  
	     
		\end{scope}
		\end{tikzpicture}
	}
	}		
	
\caption{(a) Each vertex in $V\setminus \lbrace 3,9,6 \rbrace$ is a strong articulation point. (b) An optimal solution for minimum strongly connected spanning subgraph problem with same strong articulation points. (c) An optimal solution for MBSC when $B=\lbrace 9 \rbrace$.}
\label{figure:example1section1}
\end{figure}
Moreover, Example \ref{figure:example2section1} shows that an optimal solution for the minimum strongly connected spanning subgraph problem is not necessarily an optimal solution for MBSC. 

\begin{figure}[htbp]
	\centering

\subfigure[]{	
\scalebox{0.91}{

		\begin{tikzpicture}[xscale=1.7]
		\tikzstyle{every node}=[color=black,draw,circle,minimum size=1cm]
			\node (v8) at (0.2,3.4) {$8$};
		\node (v9) at (-0.71,0.1) {$9$};
	
		\node (v1) at (-1.75,3.1) {$1$};
		\node (v10) at (1.72,-1.5) {$10$};
		\node (v2) at (-2.31,0) {$2$};
		\node (v3) at (0, -2.45) {$3$};
		\node (v4) at (3.43,-2) {$4$};
		\node (v5) at (1.41,0.5) {$5$};
		\node (v6) at (3.71,0.6) {$6$};
		\node (v7) at (3.72,3.3) {$7$};

		\begin{scope}   
		\tikzstyle{every node}=[auto=right] 
		\draw [-triangle 45] (v3) to (v5);
	    \draw [-triangle 45] (v5) to (v10);
	    \draw [-triangle 45] (v10) to (v4);
	    \draw [-triangle 45] (v4) to (v6);
	    \draw [-triangle 45] (v6) to (v7);
	    \draw [-triangle 45] (v7) to (v8);
	    \draw [-triangle 45] (v8) to (v1);
	    \draw [-triangle 45] (v1) to (v2);
	    \draw [-triangle 45] (v2) to (v9);
	    \draw [-triangle 45] (v9) to (v3);
	    \draw [-triangle 45] (v9) to (v5);
	    \draw [-triangle 45] (v9) to (v1);
	    \draw [-triangle 45] (v9) to (v8);
	    \draw [-triangle 45] (v2) to (v3);
	   
	    \draw [-triangle 45] (v5) to (v8);
	  
		\end{scope}
		\end{tikzpicture}
	}
	}	
	\subfigure[]{	
\scalebox{0.91}{

		\begin{tikzpicture}[xscale=1.7]
		\tikzstyle{every node}=[color=black,draw,circle,minimum size=1cm]
			\node (v8) at (0.2,3.4) {$8$};
		\node (v9) at (-0.71,0.1) {$9$};
	
		\node (v1) at (-1.75,3.1) {$1$};
		\node (v10) at (1.72,-1.5) {$10$};
		\node (v2) at (-2.31,0) {$2$};
		\node (v3) at (0, -2.45) {$3$};
		\node (v4) at (3.43,-2) {$4$};
		\node (v5) at (1.41,0.5) {$5$};
		\node (v6) at (3.71,0.6) {$6$};
		\node (v7) at (3.72,3.3) {$7$};

		\begin{scope}   
		\tikzstyle{every node}=[auto=right] 
		\draw [-triangle 45] (v3) to (v5);
	    \draw [-triangle 45] (v5) to (v10);
	    \draw [-triangle 45] (v10) to (v4);
	    \draw [-triangle 45] (v4) to (v6);
	    \draw [-triangle 45] (v6) to (v7);
	    \draw [-triangle 45] (v7) to (v8);
	    \draw [-triangle 45] (v8) to (v1);
	    \draw [-triangle 45] (v1) to (v2);
	    \draw [-triangle 45] (v2) to (v9);
	    \draw [-triangle 45] (v9) to (v3);

		\end{scope}
		\end{tikzpicture}
	}
	}	
	\subfigure[]{	
\scalebox{0.91}{

		\begin{tikzpicture}[xscale=1.7]
		\tikzstyle{every node}=[color=black,draw,circle,minimum size=1cm]
			\node (v8) at (0.2,3.4) {$8$};
		\node (v9) at (-0.71,0.1) {$9$};
	
		\node (v1) at (-1.75,3.1) {$1$};
		\node (v10) at (1.72,-1.5) {$10$};
		\node (v2) at (-2.31,0) {$2$};
		\node (v3) at (0, -2.45) {$3$};
		\node (v4) at (3.43,-2) {$4$};
		\node (v5) at (1.41,0.5) {$5$};
		\node (v6) at (3.71,0.6) {$6$};
		\node (v7) at (3.72,3.3) {$7$};

		\begin{scope}   
		\tikzstyle{every node}=[auto=right] 
		\draw [-triangle 45] (v3) to (v5);
	    \draw [-triangle 45] (v5) to (v10);
	    \draw [-triangle 45] (v10) to (v4);
	    \draw [-triangle 45] (v4) to (v6);
	    \draw [-triangle 45] (v6) to (v7);
	    \draw [-triangle 45] (v7) to (v8);
	    \draw [-triangle 45] (v8) to (v1);
	    \draw [-triangle 45] (v1) to (v2);
	    \draw [-triangle 45] (v2) to (v9);
	  
	    \draw [-triangle 45] (v9) to (v5);
         \draw [-triangle 45] (v9) to (v3);  
	     
		\end{scope}
		\end{tikzpicture}
	}
	}		
	
\caption{(a) Each vertex in $V\setminus \lbrace 3,9 \rbrace$ is a strong articulation point. (b) An optimal solution for minimum strongly connected spanning subgraph problem.(c) An optimal solution when $B=\lbrace 3 \rbrace$.}
\label{figure:example2section1}
\end{figure}
Let $G=(V,E)$ be a strong biconnected graph and let $B \subseteq V$ such that for each vertex $w \in B$, the subgraph $G \setminus \lbrace w\rbrace$ is strongly connected. If each vertex in $V\setminus B$ is a strong articulation point, then by results of Jaberi \cite{Jaberi2015}, there is an approximation algorithm with approximation factor $17/3$ for MBSC in this special case.

\section{Approximation algorithm for MBSC}
In this section, we prove that there is a $7$-approximation algorithm for MBSC. Algorithm \ref{algo:approxmiationalgorithmMBSC} is an approixmation algorithm for MBSC. This algorithms is based on \cite[Theorem $5.2$]{ItalianoLauraSantaroni2012} ,independent spanning trees \cite{GeorgiadisTarjan2005}, and the results of Georgiadis et al. \cite{GeorgiadisItalianoKaranasiou2020,GeorgiadisItalianoKaranasiou2017,Georgiadis12011}. 

\begin{figure}[htp]
	\begin{myalgorithm}\label{algo:approxmiationalgorithmMBSC}\rm\quad\\[-4ex]
		\begin{tabbing}
			\quad\quad\=\quad\=\quad\=\quad\=\quad\=\quad\=\quad\=\quad\=\quad\=\kill
			\textbf{Input:} A strongly biconnected graph $G=(V,E),B \subseteq V$ such that for each \\
			\>\>\>\> vertex $w \in B$, the subgraph $G \setminus \lbrace w\rbrace$ is strongly connected.\\
			\textbf{Output:} A strongly biconnected subgraph $G_{\beta}=(V,E_{\beta})$ such that for each \\
			 \>\>\>\> vertex $w \in B$, the subgraph $G_{\beta} \setminus \lbrace w\rbrace$ is strongly connected.\\
			{\small 1}\> \textbf{if} $B = V$ \textbf{then}\\
            {\small 2}\>\> $E_{1}\leftarrow$ the edge set of a minimal $2$-vertex-connected spanning subgraph of $G$ \\
            	{\small 3}\>\> $G_{\beta}\leftarrow(V,E_{\beta} )$, where $E_{\beta}=E_{1}$\\
             {\small 4}\> \textbf{else} \textbf{if} $B$ has no vertices \textbf{then} \\
             {\small 5}\>\> $G_{\beta}\leftarrow (V,E_{\beta} )$, where $E_{\beta} $ is the edge set of a minimal strongly biconnected subgraph of $G$ \\
             {\small 6}\> \textbf{else}\\		
             {\small 7}\>\>  $E_{1}\leftarrow$ the edge set of a minimal strongly biconnected spanning subgraph of $G$ \\
            {\small 8}\>\> \textbf{if} $|B|=1$ \textbf{then}\\
            {\small 9}\>\>\>\> Let $s$ be the vertex in $B$\\
             {\small 10}\>\>\>\> Let $r$ be a vertex in $V\setminus B$\\
            {\small 11}\>\>\>\>  $T \leftarrow$ a spanning tree rooted at $r$ of $G\setminus \lbrace s\rbrace$\\
            {\small 12}\>\>\>\>  $T^{r} \leftarrow$ a spanning tree rooted at $r$ in the reversal graph of $G\setminus \lbrace s\rbrace$\\
            {\small 13}\>\>\>\> $E_{2}\leftarrow $ the edge set of $T$\\
            {\small 14}\>\>\>\>  \textbf{for} each edge $(v,w) \in T^{r}$ \textbf{do} \\         
            {\small 15}\>\>\>\>\>  $E_{2} \rightarrow E_{2}\cup \lbrace (w,v) \rbrace$\\
   
            {\small 16}\>\>\>\>  $G_{\beta}\leftarrow(V,E_{\beta} )$, where $E_{\beta}=E_{1}\cup E_{2}$\\     
             {\small 17}\>\> \textbf{else}\\
			{\small 18}\>\>\>\> Let $u \in V\setminus B$.\\
			{\small 19}\>\>\>\> $E_{2} \leftarrow $  the edge set of two independent trees rooted at $u$ of $G$\\
			{\small 20}\>\>\>\> $E_{3} \leftarrow \emptyset$\\
			
			{\small 21}\>\>\>\>  $E_{4} \leftarrow $  the edge set of two independent trees rooted at $u$ of the revesal graph $G^{R}$\\
			{\small 22}\>\>\>\>\textbf{for} each edge $(v,w) \in E_{4}$ \textbf{do} \\  
			{\small 23}\>\>\>\> \>\>$E_{3} \leftarrow E_{3}\cup \lbrace (w,v) \rbrace$\\
			{\small 24}\>\>\> \> $G_{\beta}\leftarrow(V,E_{\beta} )$, where $E_{\beta}=E_{1}\cup E_{2}\cup E_{3}$\\
			 {\small 25}\> \textbf{for} each edge $(u,w) \in E_{\beta}$ \textbf{do} \\
            {\small 26}\>\> \textbf{if} no vertex in $B$ is a strong articulation point of the subgraph $(V,E_{\beta}\setminus \lbrace (u,w) \rbrace)$  \textbf{then} \\
            {\small 27}\>\>\>$E_{\beta}\leftarrow E_{\beta}\setminus \lbrace (u,w) \rbrace$.\\
		{\small 28}\>\textbf{output} the subgraph $G_{\beta}=(V,E_{\beta})$ 
	
		\end{tabbing}
	\end{myalgorithm}
\end{figure}

Lemma \ref{def:feasiblesolutionproof} shows that the output of Algorithm \ref{algo:approxmiationalgorithmMBSC} is a feasible solution for MBSC.

\begin{lemma} \label{def:feasiblesolutionproof}
Let $G=(V,E)$ be a strong biconnected graph and let $B \subseteq V$ such that for each vertex $w \in B$, the subgraph $G \setminus \lbrace w\rbrace$ is strongly connected. Let $G_{\beta}=(V,E_{\beta} )$ be the output of the Algorithm \ref{algo:approxmiationalgorithmMBSC}. Then, $G_{\beta}=(V,E_{\beta} )$ is strongly biconnected and for each vertex $w \in B$, $w$ is not a strong articulation point.
\end{lemma}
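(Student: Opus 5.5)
The proof splits into two stages. First, the edge set $E_\beta$ produced before the pruning loop (lines 25--27) is already feasible in each of the four branches of Algorithm \ref{algo:approxmiationalgorithmMBSC}. Second, the pruning loop preserves feasibility. The key observation is monotonicity: strong biconnectivity and strong connectivity of $G_\beta\setminus\{w\}$ are both preserved under adding edges on the same vertex set. Hence it suffices to show that some sub-collection of the chosen edges has each required property.

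\textbf{Feasibility before pruning.}

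\emph{Case $B=V$.} Line 2 picks a minimal $2$-vertex-connected spanning subgraph. Since $G$ is strongly biconnected and every $G\setminus\{w\}$ is strongly connected, $G$ is $2$-vertex-connected, so such a subgraph exists. A $2$-vertex-connected digraph is strongly connected with no strong articulation point. Its underlying graph is biconnected, because removing any vertex leaves a strongly, hence weakly, connected graph. So the output is strongly biconnected and feasible.

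\emph{Case $B=\emptyset$.} This is immediate.

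\emph{Case $|B|=1$.} $E_1$ already gives strong biconnectivity. We must show that $G_\beta\setminus\{s\}$ is strongly connected. The edges of $T$ give paths from $r$ to every vertex of $V\setminus\{s\}$ that avoid $s$. The reversed edges of $T^r$ give paths from every such vertex to $r$, also avoiding $s$. Together these make $G_\beta\setminus\{s\}$ strongly connected.

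\emph{Case $|B|\ge 2$.} We use the defining property of two independent spanning trees rooted at $u$ in $G$ (\cite{GeorgiadisTarjan2005}). For every $v$, the two tree paths from $u$ to $v$ share only the vertices dominating $v$ in the flowgraph $(G,u)$. For $w\in B$, $G\setminus\{w\}$ is strongly connected, so $w$ dominates no vertex other than itself. Therefore, for each $v\neq w$, at least one of the two tree paths from $u$ to $v$ avoids $w$. Thus $E_2$ gives $u$-to-$v$ reachability in $G_\beta\setminus\{w\}$. Symmetrically, $E_3$ (the reversed independent trees of $G^R$) gives $v$-to-$u$ reachability avoiding $w$. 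Since $u\notin B$, $u\neq w$, so $G_\beta\setminus\{w\}$ is strongly connected. This is exactly the content of \cite[Theorem 5.2]{ItalianoLauraSantaroni2012} together with \cite{Georgiadis12011}. $E_1$ again supplies strong biconnectivity.

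\textbf{Pruning.}

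Line 26 only guarantees that no vertex of $B$ becomes a strong articulation point. I would read this condition, as the algorithm intends, as including the requirement that the graph remains strongly connected and strongly biconnected; otherwise the test would be vacuous on a disconnected graph. The invariant maintained by each deletion is then: $(V,E_\beta)$ is strongly biconnected, and $(V,E_\beta)\setminus\{w\}$ is strongly connected for all $w\in B$. By induction over the loop, the output satisfies this invariant.

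\textbf{Main obstacle.} The literal test on line 26 does not mention strong biconnectivity. I would handle this by interpreting the test as above, which is the only way the pruning step is meaningful, and stating the interpretation explicitly. A secondary subtlety is the case $B=V$. There one must verify that a $2$-vertex-connected digraph is strongly biconnected, which requires that $|V|\ge 3$ and that the underlying graph has no cut vertex. Both follow directly from the definition.
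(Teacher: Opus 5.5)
Your proof is correct. It follows the paper's case split: $B=V$, $B=\emptyset$, $|B|=1$ and $|B|\ge 2$. In each case $E_1$ gives strong biconnectivity, and $E_2$ (respectively $E_2\cup E_3$) gives strong connectivity of $G_\beta\setminus\{w\}$. It differs from the paper in two places.

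For $|B|\ge 2$, the paper only cites Theorem 5.2 of Italiano et al.\ and ``the properties of two independent trees''. You give the actual argument: since $u\notin B$ and $G\setminus\{w\}$ is strongly connected, $w$ dominates no $v\neq w$ in $G(u)$ or $G^R(u)$, so one of the two tree paths avoids $w$. That makes the step self-contained.

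More substantially, the paper's proof never mentions lines 25--27. Its sentence ``since $E_1\subseteq E_\beta$, $G_\beta$ is strongly biconnected'' holds only for the edge set before pruning, because the line-26 test can delete edges of $E_1$. Your remark that the literal test is too weak is not cosmetic. For $B=\emptyset$ the condition in line 26 holds vacuously, so every edge would be removed and the output would be edgeless. Under a literal reading, the lemma is therefore false.

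Your amended reading of line 26 is genuinely needed: delete an edge only if strong biconnectivity and strong connectivity of each $G_\beta\setminus\{w\}$, $w\in B$, survive. With it, your invariant argument covers the whole algorithm. The paper's argument establishes feasibility only of the edge set present before line 25.
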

\begin{proof}
 We consider three cases:
\begin{itemize} 
\item Case $1$. $B=V$. In this case, $G_{\beta}=(V,E_{\beta} )$ is a minimal $2$-vertex connected spanning subgraph of $G$. Since the subgraph $G_{\beta}=(V,E_{\beta} )$ is $2$-vertex connected, $G_{\beta}=(V,E_{\beta} )$ has no strong articulation points. Moreover, $G_{\beta}=(V,E_{\beta} )$ is strongly biconnected because its underlying graph is biconnected.
\item Case $2$. $|B|=0$. In this case. each feasible solution for minimum strongly biconnected spanning subgrpah problem is a feasible solution for MBSC beacuse minimum strongly biconnected spanning subgraph problem is a special case of MBSC when $B$ does not contain any vertex..
\item Case $3$. $|B|>0 $ and $|B|<n$. Since  $E_{1}\subseteq E_{\beta}$,the subgraph  $G_{\beta}=(V,E_{\beta} )$ is strongly biconnected. Now we need to ensure that none of the vertices in $B$ is a strong articulation point of the subgraph $G_{\beta}=(V,E_{\beta} )$. 
We consider two subcases:
\begin{enumerate}
\item Case $3(a)$. $|B|=1$ and $B \neq V$. Let $B=\lbrace s \rbrace$. The subgraph  $(V\setminus \lbrace s \rbrace,E_{2})$ is strongly connected and 
$E_{2}\subseteq E_{\beta}$. Therefore. the vertex $s$ is not a strong articulation point of $G_{\beta}=(V,E_{\beta} )$.
 
\item Case $3(b)$. $|B|>1 $ and $|B|\neq V$. By \cite[Theorem $5.2$]{ItalianoLauraSantaroni2012} and the properties of two independent tree \cite{Georgiadis12011,GeorgiadisTarjan2005}, each vertex in $B$ is not a strong articulation point of $G_{\beta}=(V,E_{\beta} )$.
\end{enumerate}
\end{itemize}

\end{proof} 

Let $E_{O}$ be an optimal solution for MBSC. Since the subgraph $(V,E_{O}) $ is strongly biconnected, the subgraph $(V,E_{O}) $ is strongly connected and hence $|E_{O}|\geq n$. The following theorem shows that Algorithm \ref{algo:approxmiationalgorithmMBSC} has an approximation ratio of $7$.
\begin{theorem} \label{def:theoremforapproximationratio}
Let $G_{\beta}=(V,E_{\beta} )$ be the output of the Algorithm \ref{algo:approxmiationalgorithmMBSC}. Then 
$|E_{\beta}|\leq 7n-4$.
\end{theorem}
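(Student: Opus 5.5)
We bound $|E_{\beta}|$ by counting the edges contributed by each set the algorithm unions together. The final loop (lines 25--27) only deletes edges, so it suffices to bound $|E_{\beta}|$ just before line 25. We argue case by case along the branches of Algorithm \ref{algo:approxmiationalgorithmMBSC}; the worst case is the last branch, $|B|\geq 2$ and $B\neq V$.

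\emph{Bounding $E_{1}$.} A minimal $2$-vertex-connected spanning subgraph of a digraph has at most $4n$ edges; this is the classical bound of Mader \cite{Mader1985} as used by Georgiadis et al. \cite{Georgiadis12011,GeorgiadisItalianoKaranasiou2020}. We would also like the same bound for a minimal strongly biconnected spanning subgraph. A minimal strongly biconnected subgraph is a union of a minimal strongly connected spanning subgraph, which has at most $2n-2$ edges, and enough edges to make the underlying graph biconnected. Since every edge is essential, a crude count should give $|E_{1}|\leq 3n$. If that proves hard, we fall back on the $4n$ bound from \cite{Jaberi22}.

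\emph{Cases $B=V$ and $B=\emptyset$.} Here $E_{\beta}=E_{1}$, so the bound above already gives $|E_{\beta}|\leq 4n\leq 7n-4$ for $n\geq 2$.

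\emph{Case $|B|=1$.} The trees $T$ and $T^{r}$ each span $n-1$ vertices, so each has $n-2$ edges. Thus $|E_{2}|\leq 2n-4$ and $|E_{\beta}|\leq |E_{1}|+2n-4$.

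\emph{Case $|B|\geq 2$.} Two independent spanning trees rooted at $u$ in $G$ have $2(n-1)$ edges, and likewise in $G^{R}$. So $|E_{2}|+|E_{3}|\leq 4n-4$, and we get $|E_{\beta}|\leq |E_{1}|+4n-4$. To reach $7n-4$ we therefore need $|E_{1}|\leq 3n$.

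The main obstacle is exactly this bound $|E_{1}|\leq 3n$ for a minimal strongly biconnected spanning subgraph. Our first attempt is to decompose $E_{1}$ as $E_{s}\cup E_{b}$. Here $E_{s}$ is the set of edges needed for strong connectivity, of size at most $2n-2$, as for minimal strongly connected spanning subgraphs \cite{Khulleretal94}. The remaining edges $E_{b}$ are each essential for biconnectivity of the underlying graph, and an ear-decomposition argument on the block tree should bound them by roughly $n$. If instead only $|E_{1}|\leq 4n$ is available, we would try to absorb part of $E_{1}$ into the tree edges. Any minimal strongly connected subgraph fits inside $E_{2}\cup E_{3}$, because one tree of $E_{2}$ together with one reversed tree of $E_{3}$ already makes the graph strongly connected. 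So only the additional biconnectivity edges of $E_{1}$ need to be counted, and we would bound these by $3n$.
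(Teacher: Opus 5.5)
\paragraph{Verdict: same route as the paper, but the key bound is left unproved.} Your case analysis matches the paper's: the same three branches, the same counts $|E_2|\le 2(n-2)$ and $|E_2\cup E_3|\le 4n-4$, and the remark that lines 25--27 only delete edges, which the paper leaves implicit. You also correctly see that everything depends on $|E_1|\le 3n$ in the branch $|B|\ge 2$, $B\neq V$. The gap is that you never establish this bound, and with your $4n$ fallback you only reach $8n-4$. The paper does not prove it either; it imports $|E_1|\le 3n$ from \cite{Jaberi2022}.

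\paragraph{Why your two routes do not close it.} The phrases ``a crude count should give'' and ``should bound them by roughly $n$'' are not arguments. ``Roughly $n$'' is also too weak, since you need $|E_b|\le n+2$ exactly. The absorption fallback fails outright. $E_1$ is computed in line 7 independently of $E_2$ and $E_3$, so nothing forces any edge of $E_1$ into $E_2\cup E_3$. That $E_2\cup E_3$ contains \emph{some} strongly connected spanning subgraph says nothing about $|E_1\setminus(E_2\cup E_3)|$. The claim that \emph{any} minimal strongly connected subgraph fits inside $E_2\cup E_3$ is false.

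\paragraph{How to complete your decomposition.} Take $E_s\subseteq E_1$ to be a minimal strongly connected spanning subgraph, so $|E_s|\le 2n-2$. Put $E_b=E_1\setminus E_s$, and let $U$ be the underlying undirected graph of $(V,E_1)$.

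For $e\in E_b$, the graph $(V,E_1\setminus\{e\})$ still contains $E_s$ and so stays strongly connected. By minimality of $E_1$, removing $e$ must therefore destroy the biconnectivity of $U$. Two consequences follow: $e$ has no antiparallel partner in $E_1$, and its undirected copy is a critical edge of the $2$-connected graph $U$.

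By Mader's theorem, every cycle of critical edges in a $2$-connected graph passes through a vertex of degree $2$. But any vertex $v$ on a cycle of $E_b$-edges also has an out-edge in $E_s$. That edge gives a third undirected edge at $v$, distinct from the two cycle edges because $E_b$-edges have no antiparallel partners in $E_1$.

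Hence $E_b$ is a forest in $U$, so $|E_b|\le n-1$ and $|E_1|\le 3n-3$. With this, or with the citation, your count gives $|E_\beta|\le 3n+4n-4=7n-4$.
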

\begin{proof}
Let $G=(V,E)$ be a strong biconnected graph and let $B \subseteq V$ such that for each vertex $w \in B$, the subgraph $G \setminus \lbrace w\rbrace$ is strongly connected. Let $G_{\beta}=(V,E_{\beta} )$ be the output of the Algorithm \ref{algo:approxmiationalgorithmMBSC}. We consider three cases:
\begin{itemize} 
\item Case $1$. $B=V$. In this case, the directed graph $G=(V,E)$ is $2$-vertex connected since no vertex in $B$ is a strong articulation point of $G$. The output $G_{\beta}=(V,E_{\beta})$ is a minimal $2$-vertex connected spanning subgraph of $G$. Results of Edmonds \cite{Edmonds1972} and Mader \cite{Mader1985} imply that  $|E_{\beta}| \leq 4n$ \cite{CheriyanTThurimella2000,Georgiadis12011}.
\item Case $2$. $B$ contains no vertices. In this case. each solution for minimum strongly biconnected spanning subgrpah problem is a solution for MBSC. Therefore, $|E_{\beta}| \leq 3n$ \cite{Jaberi2022}.
\item Case $3$. $|B|>0 $ and $|B|<n$. In this case we have $|E_{1}| \leq 3n$ \cite{Jaberi2022}.We consider two subcases:
\begin{enumerate}
\item Case $3(a)$. $|B|=1$ and $B\neq V$. Note that $E_{2}$ consists of two spanning trees. Thus, we $|E_{2}|\leq 2(n-2)$.
\item Case $3(b)$. $|B|>1 $ and $|B|\neq V$. Note that $E_{2}$ consists of two independent trees  and $E_{3}$ consists of two independent trees. Therefore, we have $|E_{2}\cup E_{3}|\leq 4n-4$.
\end{enumerate}
\end{itemize}

\end{proof}

The following theorem shows that Algorithm \ref{algo:approxmiationalgorithmMBSC} runs in polynomial time.
\begin{Theorem}
The running time of Algorithm \ref{algo:approxmiationalgorithmMBSC} is $O(nm)$.
\end{Theorem}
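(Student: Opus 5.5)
The running time is the sum of the costs of the three phases of Algorithm \ref{algo:approxmiationalgorithmMBSC}: building the initial subgraph (lines $1$--$5$), adding the auxiliary edge sets (lines $6$--$24$), and the final edge-deletion loop (lines $25$--$27$). Throughout we assume $m\geq n$, which holds because $G$ is strongly connected.

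\textbf{Phase 1.} A minimal $2$-vertex-connected spanning subgraph (line $2$) or a minimal strongly biconnected spanning subgraph (lines $5$ and $7$) can be obtained by the standard greedy deletion procedure. We scan the edges once and delete an edge whenever the remaining graph still has the required property.
\begin{itemize}
\item For strong biconnectivity, each test is a strong connectivity check plus a biconnectivity check of the underlying undirected graph. Both take $O(n+m)$ time by \cite{TAARJAN72,Tarjan1974}.
\item For $2$-vertex-connectivity, each test takes linear time by \cite{Georgiadis2010}.
\end{itemize}
This is the main obstacle. Scanning all $m$ edges naively costs $O(m(n+m))=O(m^2)$, not $O(nm)$. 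To get $O(nm)$ I would first replace $G$ by a sparse certificate. For $2$-vertex-connectivity, the results cited in \cite{CheriyanTThurimella2000,Georgiadis12011} give a $2$-vertex-connected spanning subgraph with $O(n)$ edges in linear time. For strong biconnectivity, a strongly biconnected spanning subgraph with at most $3n$ edges is available via \cite{Jaberi2022}. Running the greedy minimalization only on the certificate, which has $O(n)$ edges, costs $O(n)\cdot O(n)=O(n^2)\subseteq O(nm)$.

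\textbf{Phase 2.} This phase is linear.
\begin{itemize}
\item Spanning trees of $G\setminus\lbrace s\rbrace$ and of its reversal (lines $11$--$15$) come from a DFS or BFS in $O(n+m)$ time.
\item Two independent spanning trees of $G$ and of $G^{R}$ rooted at $u$ (lines $19$--$23$) are computed in linear time using dominator trees \cite{GeorgiadisTarjan2005,Georgiadis12011}.
\end{itemize}

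\textbf{Phase 3.} By Theorem \ref{def:theoremforapproximationratio}, before line $25$ we have $|E_{\beta}|\leq 7n-4=O(n)$, so the loop runs $O(n)$ times. Each iteration must decide whether some vertex of $B$ is a strong articulation point of $(V,E_{\beta}\setminus\lbrace(u,w)\rbrace)$. All strong articulation points of a digraph can be found in linear time \cite{ItalianoLauraSantaroni2012}, and this graph has $O(n)$ edges, so each test costs $O(n)$. Intersecting the result with $B$ costs $O(n)$ using a marker array. The loop therefore takes $O(n^2)$ time.

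\textbf{Total.} Summing the three phases gives $O(n^2+m)=O(nm)$.
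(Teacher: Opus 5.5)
Your argument is correct in outline and more self-contained than the paper's. The paper's proof only cites black-box bounds: $O(n^{2})$ for a minimal $2$-vertex-connected spanning subgraph \cite{Georgiadis12011,GeorgiadisItalianoKaranasiou2020}, $O(nm)$ for a minimal strongly biconnected spanning subgraph \cite{Jaberi22}, and linear time for the spanning trees and independent spanning trees. It never accounts for the deletion loop in lines $25$--$27$. You derive the minimalization cost yourself, via a sparse certificate followed by greedy deletion with linear-time tests. You also bound the loop explicitly using $|E_{\beta}|=O(n)$ and linear-time detection of strong articulation points, so you cover a step the paper skips.

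Two small remarks on Phase~3. First, the $O(n)$ bound you need is on the edge set \emph{before} line $25$. That is what the proof of Theorem~\ref{def:theoremforapproximationratio} actually bounds, since the loop only deletes edges. Second, $(V,E_{\beta}\setminus\lbrace(u,w)\rbrace)$ need not be strongly connected. So apply \cite{ItalianoLauraSantaroni2012} to each strongly connected component separately, which is still linear.

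The one step that does not hold up as written is the certificate for strong biconnectivity. In the paper, \cite{Jaberi2022} serves only as a size bound for a \emph{minimal} strongly biconnected spanning subgraph. Using it to obtain a $3n$-edge certificate is circular, because producing such a subgraph is exactly the task of lines $5$ and $7$, and you give no running time for it. You can fix this in one of two ways:
\begin{itemize}
\item Fall back on the $O(nm)$ bound of \cite{Jaberi22}, as the paper does.
\item Build a certificate directly in $O(n+m)$ time. Take an out-branching and an in-branching of $G$ with a common root; these have at most $2n-2$ edges and give strong connectivity. Then add a DFS tree of the underlying undirected graph together with one low-point back edge per vertex, each lifted to one directed edge of $G$. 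These have at most $2n-2$ edges and preserve biconnectivity, because every low value is unchanged.
\end{itemize}
With the second option, the union is strongly biconnected with at most $4n-4$ edges. Your greedy step on it then costs $O(n^{2})$, and your total becomes the sharper $O(n^{2}+m)\subseteq O(nm)$.
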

\begin{proof}
	if $B=V$, then the input graph is $2$-vertex connected. A minimal $2$-vertex-connected subgraph of a $2$-vertex connected directed graph can be found in time $O(n^2)$ \cite{Georgiadis12011,GeorgiadisItalianoKaranasiou2020}.
	Therefore, line $2$ takes time $O(n^2)$. Moreover, a minimal strongly biconnected spanning subgraph of a strongly biconnected directed graph can be computed in time $O(nm)$ \cite{Jaberi22}. A spanning tree in directed graphs can be found in linear time using DFS or BFS. Non tirival dominators and two independent spanning trees of a flow graph can be identified in $O(n+m)$ time. \cite{GeorgiadisTarjan2005,Georgiadis12011}.
\end{proof}
\section{Additional problems related to strong connectivity amd strong biconnectivity}
In this section we describe additional problems related to strong connectivity and strong biconnectivity.
\begin{problem} \label{def:problemmbsci}
Let $G=(V,E)$ be a strong biconnected graph, let $i$ be an integer, and let $B \subseteq V$ such that for each vertex $w \in B$, the subgraph $G \setminus \lbrace w\rbrace$ is strongly connected. The goal is to compute a subset $E_{\beta} \subseteq E$ of minimum size such that the subgraph $G_{\beta}=(V,E_{\beta})$ is strongly biconnected and $|\lbrace v\mid v\in B \text{ and } v \text{ not a strong articulation point of } G_{\beta} \rbrace|\geq i$.
\end{problem}
Problem \ref{algo:approxmiationalgorithmMBSC} is a special case of the minimum 2-vertex connected spanning subgraph problem when $B=V$ and $i=n$. Algorithm \ref{algo:approxmiationalgorithmMBSC} is a polynomial time approximation algorithm with approximation factor $7$ for this problem.

\begin{problem} \label{def:problemmbscatmosti}
Let $G=(V,E)$ be a $2$-vertex strongly biconnected graph \cite{Jaberi2021} and let $i$ be an integer.
The goal is to compute a subset $E_{\beta} \subseteq E$ of minimum size such that the subgraph $G_{\beta}=(V,E_{\beta})$ is strongly biconnected and the number of strong articulation point of this subgraph is at most $i$.
\end{problem}
This problem is NP-complete because the minimum $2$-vertex connected spanning subgraph problem is a special case of this problem when $i=0$. Note that an optimal solution for the minimum strongly connected spanning subgraph problem is not necessarily an optimal solution for this problem. Furthermore, an optimal solution for the minimum $2$-vertex strongly biconnected  spanning subgraph problem is not necessarily an optimal solution for this problem. 
Algorithm \ref{algo:approxmiationalgorithmMBSC} can be modified in order to obtain a polynomial time approximation algorithm with approximation factor $7$.

\begin{problem}\label{def:problembapi}
Let $G=(V,E)$ be a $2$-vertex strongly biconnected graph \cite{Jaberi2021} and let $i$ be an integer.
The goal is to compute a subset $E_{\beta} \subseteq E$ of minimum size such that the subgraph $G_{\beta}=(V,E_{\beta})$ is strongly biconnected and the number of  b-articulation point of this subgraph is at most $i$.
\end{problem}
The minimum $2$-vertex strongly biconnected spanning subgraph problem is a special case of Problem \ref{def:problembapi} when $i=0$. Therefore, Problem \ref{def:problembapi} is NP-complete.  Note that an optimal solution for the minimum strongly connected spanning subgraph problem is not necessarily an optimal solution for this problem. Furthermore, an optimal solution for the minimum $2$-vertex strongly biconnected  spanning subgraph problem is not necessarily an optimal solution for this problem.

\end{document}